\def\zZ{{\mathbb Z}}
\def\rR{{\mathbb R}}
\def\eE{{\mathbb E}}
\def\pP{{\mathbb P}}
\def\QED{\mbox{\rule[0pt]{1.5ex}{1.5ex}}}
\def\@begintheorem#1#2{\tmpitemindent\itemindent\topsep 0pt\rm\trivlist
    \item[\hskip \labelsep{\indent\it #1\ #2:}]\itemindent\tmpitemindent}
\def\@opargbegintheorem#1#2#3{\tmpitemindent\itemindent\topsep 0pt\rm \trivlist
    \item[\hskip\labelsep{\indent\it #1\ #2\
    \rm(#3):}]\itemindent\tmpitemindent}
\def\@endtheorem{\endtrivlist\unskip}
\newtheorem{theorem}{Theorem}
\newtheorem{definition}{Definition}
\newtheorem{proposition}{Proposition}
\newtheorem{lemma}{Lemma}
\newtheorem{remark}{Remark}
\renewcommand{\theequation}{\arabic{section}.\arabic{equation}}
\newcommand{\supp}{\operatorname{supp}}
\begin{document}

\title{Interaction Strictly Improves the Wyner-Ziv Rate-distortion
Function$^{\text{\small 1}}$}



\author{\IEEEauthorblockN{Nan Ma}
\IEEEauthorblockA{ECE Dept, Boston University\\
Boston, MA 02215 \\ {\tt nanma@bu.edu}}
\and
\IEEEauthorblockN{Prakash Ishwar}
\IEEEauthorblockA{ECE Dept, Boston University\\
 Boston, MA 02215 \\ {\tt pi@bu.edu}}
}

\maketitle

\begin{abstract}
In 1985 Kaspi provided a single-letter characterization of the
sum-rate-distortion function for a two-way lossy source coding problem
in which two terminals send multiple messages back and forth with the
goal of reproducing each other's sources. Yet, the question remained
whether more messages can strictly improve the sum-rate-distortion
function. Viewing the sum-rate as a functional of the distortions and
the joint source distribution and leveraging its convex-geometric
properties, we construct an example which shows that two messages can
strictly improve the one-message (Wyner-Ziv) rate-distortion
function. The example also shows that the ratio of the one-message
rate to the two-message sum-rate can be arbitrarily large and
simultaneously the ratio of the backward rate to the forward rate in
the two-message sum-rate can be arbitrarily small.
\end{abstract}

\section{Introduction}
\addtocounter{footnote}{+1} \footnotetext{This material is based upon
work supported by the US National Science Foundation (NSF) under award
(CAREER) CCF--0546598 and CCF--0915389. Any opinions, findings, and
conclusions or recommendations expressed in this material are those of
the authors and do not necessarily reflect the views of the NSF.}

Consider the following two-way lossy source coding problem studied in
\cite{Kaspi1985}. Let $(X(1),Y(1)), \ldots, (X(n), Y(n))$ be $n$ iid
samples of a two-component discrete memoryless stationary source with
joint pmf $p_{XY}(x,y)$, $(x, y) \in \mathcal X \times \mathcal Y$,
$|\mathcal X \times \mathcal Y|< \infty$. Terminal A observes $\mathbf
X:=(X(1),\ldots,X(n))$ and terminal B observes $\mathbf
Y:=(Y(1),\ldots,Y(n))$. Terminal B is required to produce
$\widehat{\mathbf X}:=(\widehat X(1),\ldots,\widehat X(n))\in
\widehat{\mathcal X}^n$, where $\widehat{\mathcal X}$ is a
reproduction alphabet with $|\widehat{\mathcal X}|<\infty$, such that
the expected distortion $\eE[d^{(n)}(\mathbf X,\widehat{\mathbf X})]$
does not exceed a desired level, where
\[d^{(n)}(\mathbf x,\hat{\mathbf x}):= \frac{1}{n}\sum_{i=1}^n
d(x(i),\hat{x}(i)),\] and $d: \mathcal X \times \widehat{\mathcal
X}\rightarrow \rR^+\bigcup \{\infty\}$ is a per-sample (single-letter)
distortion function. Terminal A is likewise required to reproduce the
source observed at terminal B within some distortion level with
respect to another (possibly different) distortion function. To
achieve this objective, the terminals are allowed to send a certain
number of messages back and forth where each message sent from a
terminal at any time only depends on the information available at the
terminal up to that time. In \cite{Kaspi1985}, Kaspi provided a
single-letter characterization of the sum-rate-distortion function for
any finite number of messages.  Yet, whether more messages can
strictly improve the sum-rate-distortion function was left
unresolved. If the goal is to reproduce both sources {\em losslessly}
at each terminal (zero distortion) then there is no advantage in using
multiple messages; two messages are sufficient and the minimum
sum-rate cannot be reduced by using more than two
messages.\footnote{If only one of the sources is required to be
losslessly reproduced at the other terminal then one message is
sufficient and the minimum sum-rate cannot be improved by using more
than one message. However, if $\bf X$ and $\bf Y$ are nonergodic,
two-way interactive coding can be strictly better than one-way
non-interactive coding\cite{Dakehe}.} If, however, the goal is changed
to losslessly {\em compute functions} of sources at each terminal,
then multiple messages can decrease the minimum sum-rate by an
arbitrarily large factor \cite{OrlitskyRoche,ISIT08}. Therefore, the
key unresolved question pertains to {\em lossy source reproduction}:
can multiple messages strictly decrease the minimum sum-rate for a
given (nonzero) distortion? This question is unresolved even when only
one source needs to be reproduced with nonzero distortion.

In this paper, we construct the first example which shows that two
messages can strictly improve the one-message (Wyner-Ziv)
rate-distortion function. The example also shows that the ratio of the
one-message rate to the two-message sum-rate can be arbitrarily large
and simultaneously the ratio of the backward rate to the forward rate
in the two-message sum-rate can be arbitrarily small.  The key idea
which enables the construction of this example is that the sum-rate is
a {\em functional} of the distortion and the joint source distribution
which has certain convex-geometric properties.

\section{Problem setup and related prior results}
\label{sec:problem}
\subsection{One-message Wyner-Ziv rate-distortion
function}\label{subsec:onemessage}
\begin{definition}\label{def:onemessagecode}
A one-message distributed source code with parameters $(n,|{\mathcal
M}|)$ is the tuple $(e^{(n)},g^{(n)})$ consisting of an encoding
function $e^{(n)}: \mathcal X^n \rightarrow \mathcal M$ and a decoding
function $g^{(n)}: \mathcal Y^n \times \mathcal M \rightarrow
\widehat{\mathcal X}^n$. The output of $g^{(n)}$, denoted by
$\widehat{\mathbf X}$, is called the reproduction and $(1/n) \log_2
|{\mathcal M}|$ is called the block-coding rate (in bits per sample).
\end{definition}

\begin{definition}\label{def:onemessagerate}
A tuple $(R,D)$ is admissible for one-message distributed source
coding if, $\forall \epsilon > 0$, $\exists~ \bar n(\epsilon)$ such
that $\forall n> \bar n(\epsilon)$, there exists a one-message
distributed source code with parameters $(n,|{\mathcal M}|)$
satisfying $\frac{1}{n}\log_2 |{\mathcal M}| \leq R + \epsilon,$ and
$\eE[d^{(n)}(\mathbf X,\widehat{\mathbf X})]\leq D + \epsilon.$
\end{definition}

The set of all admissible $(R,D)$ tuples in
Definition~\ref{def:onemessagerate} is a closed subset of $\rR^2$. For
any $D\in \rR$, the minimum value of $R$ such that $(R,D)$ is
admissible is the one-message Wyner-Ziv rate-distortion
function\cite{WynerZiv1976} and will be denoted by $R_{sum,1}(D)$. The
following single-letter characterization of $R_{sum,1}(D)$ was
established in \cite{WynerZiv1976}:
\begin{equation}\label{eqn:onemsgrate}
R_{sum,1}(D)=\min_{p_{U|X},g:~ \eE[d(X,g(U,Y))]\leq D} I(X;U|Y),
\end{equation}
where $U\in \mathcal U$ is an auxiliary random variable such that $U -
X - Y$ is a Markov chain and $|\mathcal U|\leq |\mathcal X|+1$, and
$g: \mathcal U\times \mathcal Y \rightarrow \widehat{\mathcal X}$ is a
deterministic single-letter decoding function.


\subsection{Two-message sum-rate-distortion
function}\label{subsec:twomessage}
\begin{definition}\label{def:twomessagecode}
A two-message distributed source code with parameters $(n,|{\mathcal
M_1}|,|{\mathcal M_2}|)$ is the tuple $(e_1^{(n)},e_2^{(n)},g^{(n)})$
consisting of encoding functions $e_1^{(n)}: \mathcal Y^n \rightarrow
\mathcal M_1$, $e_2^{(n)}: \mathcal X^n \times \mathcal M_1
\rightarrow \mathcal M_2$ and a decoding function $g^{(n)}: \mathcal
Y^n \times \mathcal M_1\times \mathcal M_2 \rightarrow
\widehat{\mathcal X}^n$. The output of $g^{(n)}$, denoted by
$\widehat{\mathbf X}$, is called the reproduction and for $i=1,2$,
$(1/n) \log_2 |{\mathcal M_i}|$ is called the $i$-th block-coding
rate.
\end{definition}

\begin{definition}\label{def:twomessagerate}
A tuple $(R_1,R_2,D)$ is admissible for two-message distributed source
coding if, $\forall \epsilon > 0$, $\exists~ \bar n(\epsilon)$ such
that $\forall n> \bar n(\epsilon)$, there exists a two-message
distributed source code with parameters $(n,|{\mathcal
M_1}|,|{\mathcal M_2}|)$ satisfying $\frac{1}{n}\log_2 |{\mathcal
M_i}| \leq R_i + \epsilon,$ for $i=1,2$, and $\eE[d^{(n)}(\mathbf
X,\widehat{\mathbf X})]\leq D + \epsilon.$
\end{definition}

The rate-distortion region, denoted by $\mathcal{RD}$, is defined as
the set of all admissible $(R_1,R_2,D)$ tuples and is a closed subset
of $\rR^3$. For any $D\in \rR$, the minimum value of $(R_1+R_2)$ such
that $(R_1,R_2,D)\in \mathcal{RD}$ is the two-message
sum-rate-distortion function and will be denoted by
$R_{sum,2}(D)$. The following single-letter characterization of
$\mathcal{RD}$ was established in \cite{Kaspi1985}:
\begin{eqnarray}
{\mathcal {RD}} = &\{&\!\!\!~(R_1,R_2,D)~|~\exists \
p_{V_1|Y},p_{V_2|XV_1},g,
s.t.\nonumber\\
&&R_1 \geq I(Y;V_1|X),\nonumber\\
&&R_2 \geq I(X;V_2|Y,V_1), \nonumber\\
&& \eE[d(X,g(V_1,V_2,Y))]\leq D ~\},\label{eqn:rateregion}
\end{eqnarray}
where $V_1\in \mathcal V_1$ and $V_2\in \mathcal V_2$ are auxiliary
random variables with bounded alphabets,\footnote{Bounds for the
cardinalities of $\mathcal V_1$ and $\mathcal V_2$ can be found in
\cite{Kaspi1985}.} such that the Markov chains $V_1 - Y - X$ and $V_2
- (X,V_1) - Y$ hold, and $g: \mathcal V_1 \times \mathcal V_2 \times
\mathcal Y\rightarrow \widehat{\mathcal X}$ is a deterministic
single-letter decoding function.
From (\ref{eqn:rateregion}), it follows
that
\begin{equation}\label{eqn:twomsgrate}
R_{sum,2}(D)=\min_{\scriptstyle p_{V_1|Y},p_{V_2|XV_1},g:\atop
\scriptstyle \eE[d(X,g(V_1,V_2,Y))]\leq D} \{I(Y;V_1|X)+I(X;V_2|Y,
V_1)\}.
\end{equation}

Since a one-message code is a special case of a two-message code with
$|\mathcal M_1|=1$, the inequality $R_{sum,2}(D)\leq R_{sum,1}(D)$
holds for all $D\in \rR$. Even though the single-letter
characterizations of $R_{sum,1}(D)$ and $R_{sum,2}(D)$ are known, it
has proved difficult to demonstrate the existence of $p_{XY}$, $d$,
and $D$ such that $R_{sum,2}(D) < R_{sum,1}(D)$. In the distributed
source coding literature, to the best of our knowledge, there is
neither an explicit example which shows that $R_{sum,2}(D) <
R_{sum,1}(D)$ nor an implicit proof that such an example must exist
nor a proof that there is no such example. In this paper we will
construct an explicit example for which $R_{sum,2}(D) < R_{sum,1}(D)$.

In \cite{Allerton09,Allerton09arXiv}, for a general $t\in \zZ^+$, we
established a connection between the $t$-message sum-rate-distortion
function and the $(t-1)$-message sum-rate-distortion function using
the rate reduction functional defined in the next subsection. This
connection and the properties of the rate reduction functional
allows one to compare $R_{sum,2}(D)$ and $R_{sum,1}(D)$ without
having to explicitly solve the optimization problem in
(\ref{eqn:twomsgrate}).

\subsection{Key tool: rate reduction functionals}\label{subsec:ratereduction}
Generally speaking, for $i=1,2$, $R_{sum,i}$ depends on
$(p_{XY},d,D)$. As in \cite{Allerton09,Allerton09arXiv}, we fix $d$
and view $R_{sum,i}$ as a functional of $(p_{XY},D)$. The sum-rate
needed to reproduce only terminal A's source at terminal B with
nonzero distortion can only be smaller than the sum-rate needed to
losslessly reproduce both sources at both terminals which is equal to
$H(X|Y) + H(Y|X)$. The reduction in the rate for lossy source
reproduction in comparison to lossless source reproduction of both
sources at both terminals is the rate-reduction functional.
Specifically, the rate reduction functionals \cite{Allerton09arXiv}
are defined as follows.  For $i=1,2$,
\begin{equation}
\rho_i(p_{XY},D) := H(X|Y)+H(Y|X)-R_{sum,i}(p_{XY},D).\label{eqn:rhoi}
\end{equation}
%
%
Since $R_{sum,1} \geq R_{sum,2}$ and $\rho_1 \leq \rho_2$ always hold,
$R_{sum,1}(p_{XY},D) > R_{sum,2}(p_{XY},D)$ if, and only if,
$\rho_1(p_{XY},D) < \rho_2(p_{XY},D)$, i.e., if, and only if,
$\rho_1(p_{XY},D) \neq \rho_2(p_{XY},D)$. The following key lemma
provides a means for testing whether or not $\rho_1 = \rho_2$ without
ever having to evaluate or work with $\rho_2$, i.e., without
explicitly constructing auxiliary variables $V_1, V_2$ and the
decoding function $g$ in (\ref{eqn:twomsgrate}).

\begin{lemma}\label{lem:rho1rho2}
The following two conditions are equivalent: (1) For all $p_{XY}$
and $D$, $\rho_1(p_{XY}, D)=\rho_2(p_{XY}, D)$. (2) For all
$p_{X|Y}$, $\rho_1(p_{X|Y}p_Y, D)$ is concave with respect to
$(p_{Y},D)$.
\end{lemma}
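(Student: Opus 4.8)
The plan is to prove the equivalence by establishing two implications, each time exploiting the convex-geometric structure that relates the two functionals. The conceptual bridge is the observation from \cite{Allerton09arXiv} that the $t$-message rate-reduction functional is obtained from the $(t-1)$-message one by a ``concavification'' (upper concave envelope) operation in the appropriate variables. Concretely, I expect that $\rho_2(p_{X|Y}p_Y, D)$ equals the upper concave envelope of $\rho_1(p_{X|Y}p_Y, \cdot)$ taken jointly over the pair $(p_Y, D)$, for each fixed conditional law $p_{X|Y}$. Granting that identity, the lemma is essentially the statement that a function equals its own concave envelope if and only if it is already concave.

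First I would prove (2) $\Rightarrow$ (1). Assume $\rho_1(p_{X|Y}p_Y, D)$ is concave in $(p_Y, D)$ for every $p_{X|Y}$. Since $\rho_1 \le \rho_2$ always holds, it suffices to show $\rho_2 \le \rho_1$. Here I would take an optimal two-message scheme achieving $R_{sum,2}(p_{XY},D)$ with auxiliaries $V_1, V_2$ and decoder $g$, and condition on the first message $V_1 = v_1$. The chain $V_1 - Y - X$ means that, given $V_1 = v_1$, the conditional source law factors as $p_{X|Y} \cdot p_{Y|V_1=v_1}$ — crucially the $X|Y$ kernel is unchanged. The second-round cost $I(X; V_2 | Y, V_1=v_1)$ together with the induced distortion is, for each $v_1$, at least a one-message rate reduction evaluated at $(p_{Y|V_1=v_1}, D_{v_1})$. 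Averaging over $v_1$ and using concavity of $\rho_1(p_{X|Y}\,\cdot\,, \cdot)$ in $(p_Y, D)$ via Jensen, together with $\sum_{v_1} p(v_1) p_{Y|V_1=v_1} = p_Y$ and $\sum_{v_1} p(v_1) D_{v_1} \le D$, collapses the averaged bound back to $\rho_1(p_{XY}, D)$, giving $\rho_2 \le \rho_1$.

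Second, (1) $\Rightarrow$ (2), which I expect to be the main obstacle and the place where one genuinely needs the converse direction of the concavification. Assume $\rho_1 = \rho_2$ identically. I would argue the contrapositive: if $\rho_1(p_{X|Y}p_Y, D)$ fails to be concave in $(p_Y, D)$ for some $p_{X|Y}$, then there exist points $(p_Y^{(1)}, D^{(1)})$ and $(p_Y^{(2)}, D^{(2)})$ and a weight $\lambda \in (0,1)$ such that the chord value exceeds $\rho_1$ at the convex combination $(p_Y^\ast, D^\ast)$. The task is then to build an explicit two-message code for the source $p_{X|Y} p_Y^\ast$ at distortion $D^\ast$ whose rate reduction meets or exceeds the chord value, hence strictly exceeds $\rho_1(p_{X|Y}p_Y^\ast, D^\ast) = R_{sum,1}$-side value; this contradicts $\rho_1 = \rho_2$. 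The construction: use the backward message $V_1$ to time-share — have terminal B send (at rate $I(Y; V_1|X)$) an index that, on a $\lambda$-fraction of a long block, reveals enough about $\mathbf Y$ to put terminal A in the ``$p_Y^{(1)}$ world'' and on the complementary fraction the ``$p_Y^{(2)}$ world'' — and then on each part run the optimal Wyner-Ziv scheme for the corresponding sub-source, the forward rate being the convex combination of the two $I(X; U|Y)$ terms. The delicate points will be (i) accounting honestly for the backward rate $I(Y;V_1|X)$ so that it is exactly absorbed into the bookkeeping $H(X|Y)+H(Y|X) - (\text{forward}+\text{backward})$, using that $I(Y;V_1|X) = H(Y|X) - H(Y|X,V_1)$ and that the Wyner-Ziv forward rates are measured against the post-$V_1$ conditional entropies; and (ii) verifying the Markov and cardinality constraints in (\ref{eqn:rateregion}) so the constructed triple genuinely lies in $\mathcal{RD}$. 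I would lean on the general $t$ vs.\ $t-1$ connection of \cite{Allerton09arXiv}, specialized to $t=2$, to package this construction cleanly rather than rebuilding it from scratch.
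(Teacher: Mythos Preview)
The paper omits the proof of Lemma~\ref{lem:rho1rho2}, stating only that it follows the proof of Theorem~2(i) in \cite{Allerton09arXiv}; your concavification argument---identifying $\rho_2(p_{X|Y}p_Y,D)$ as the upper concave envelope of $\rho_1$ in $(p_Y,D)$ by conditioning on $V_1$ for $(2)\Rightarrow(1)$ and by synthesizing $V_1$ from a mixture decomposition $p_Y^\ast=\lambda p_Y^{(1)}+(1-\lambda)p_Y^{(2)}$ for $(1)\Rightarrow(2)$---is precisely that approach. Your plan is correct and matches the paper's intended proof.
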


In simple terms, $\rho_1 = \rho_2$ if, and only if, $\rho_1$ is
concave under $Y$-marginal and distortion perturbations. The proof of
Lemma~\ref{lem:rho1rho2} is along the lines of the proof of part (i)
of Theorem~2 in \cite{Allerton09arXiv} and is omitted. In fact, it can
be proved that if for some $t \in \zZ^+$, the $t$-message
rate-reduction functional is identically equal to the $(t+1)$-message
rate reduction functional, i.e., $\rho_t = \rho_{t+1}$, then $\rho_t =
\rho_\infty$, the infinite-message rate-reduction functional.
As discussed in \cite[Remark~6]{Allerton09arXiv},
Lemma~\ref{lem:rho1rho2} does not hold if all the rate reduction
functionals are replaced by the sum-rate-distortion functionals.
Therefore the rate reduction functional is the key to the connection
between a one-message distributed source coding scheme and a
two-message distributed source coding scheme.

The remainder of this paper is organized as follows. In
Theorem~\ref{thm:improvement}, we will use Lemma~\ref{lem:rho1rho2} to
show that there exist $p_{XY}, d$, and $D$ for which
$R_{sum,1}(p_{XY},D) > R_{sum,2}(p_{XY},D)$. We will do this by (i)
choosing $p_{X|Y}$ so that $X$ and $Y$ are symmetrically correlated
binary random variables with $\pP(Y \neq X) = p$, (ii) taking
$d(x,\hat{x})$ to be the binary erasure distortion function, (iii)
selecting a value for $D$, and (iv) showing that
$\rho_1(p_{X|Y}p_Y,D)$ is not concave with respect to $p_{Y}$.  By
Lemma~\ref{lem:rho1rho2}, this would imply that $\rho_1(p_{XY},D) \neq
\rho_2(p_{XY},D)$ which, in turn, would imply that
$R_{sum,1}(p_{XY},D)>R_{sum,2}(p_{XY},D)$. In
Theorem~\ref{thm:largeratio} we will show that for certain values of
parameters $p$ and $D$, the two-message sum-rate can be split in such
a way that the ratio $R_1/R_2$ is arbitrarily small and simultaneously
the ratio $R_{sum,1}/(R_1+R_2)$ is arbitrarily large. This will be
proved by explicitly constructing auxiliary variables $V_1,V_2$ and
decoding function $g$ in (\ref{eqn:twomsgrate}). While the explicit
construction of $V_1,V_2$ and $g$ in the proof of
Theorem~\ref{thm:largeratio} may make the implicit proof of
Theorem~\ref{thm:improvement} seem redundant, it is unclear how the
explicit construction can be generalized to other families of source
distributions and distortion functions. The approach followed in the
proof of Theorem~\ref{thm:improvement}, on the other hand, provides an
efficient method to {\em test} whether the best two-message scheme can
strictly outperform the best one-message scheme for {\em more general}
distributed source coding and function computation problems. The
implicit proof naturally points to an explicit construction and was,
in fact, the path taken by the authors to arrive at the explicit
construction.


\section{Main results}\label{sec:mainresult}
\begin{theorem}\label{thm:improvement}
There exists a distortion function $d$, a joint distribution $p_{XY}$,
and a distortion level $D$ for which
$R_{sum,1}(p_{XY},D)>R_{sum,2}(p_{XY},D)$.
\end{theorem}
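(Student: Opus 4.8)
The plan is to realise the four-step recipe stated just before Lemma~\ref{lem:rho1rho2}. Take $\mathcal X=\mathcal Y=\{0,1\}$, reproduction alphabet $\widehat{\mathcal X}=\{0,1,\mathrm{e}\}$, and let $d$ be the binary erasure distortion $d(x,x)=0$, $d(x,\mathrm{e})=1$, $d(x,\bar x)=\infty$. Fix $p_{X|Y}$ to be a binary symmetric channel of crossover probability $p$, i.e.\ $\pP(X\neq Y)=p$, and let the $Y$-marginal $q:=\pP(Y=1)$ be the free parameter; then $\pP(X=1)=p*q:=p(1-q)+(1-p)q$, and since $H(X|Y)=H_b(p)$ (with $H_b$ the binary entropy function),
\[
\rho_1(p_{X|Y}p_Y,D)=2H_b(p)+\bigl(H_b(q)-H_b(p*q)\bigr)-R_{sum,1}(q,D),
\]
so everything reduces to the shape of $q\mapsto R_{sum,1}(q,D)$ at one fixed distortion level $D$.

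The first real step is to make $R_{sum,1}(q,D)$ explicit. Under erasure distortion the decoder may never output a bit unequal to $X$, and for $0<p<1$, $0<q<1$ the side information $Y$ never excludes a value of $X$ that the message $U$ has not already excluded; hence in (\ref{eqn:onemsgrate}) the decoder can emit a non-erasure symbol only at those $u$ for which $U=u$ already pins down $X$, and we may take $\mathcal U=\{0,1,\mathrm{e}\}$ with $U=0\Rightarrow X=0$ and $U=1\Rightarrow X=1$. Writing $c_0:=\pP(U=\mathrm{e}\mid X=0)$, $c_1:=\pP(U=\mathrm{e}\mid X=1)$, $r:=p*q$, the distortion constraint is $c_0(1-r)+c_1r\le D$ and a direct computation gives
\[
I(X;U|Y)=H_b(p)-(1-q)\,\psi(c_0,c_1;p)-q\,\psi(c_0,c_1;1-p),
\]
where $\psi(c_0,c_1;p):=\bigl(c_0(1-p)+c_1p\bigr)H_b\!\bigl(\tfrac{c_0(1-p)}{c_0(1-p)+c_1p}\bigr)$. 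Since $\psi$ is concave and nondecreasing in $(c_0,c_1)$ and the distortion constraint is active at the optimum, evaluating $R_{sum,1}(q,D)$ is a one-parameter concave program whose optimiser can be followed as $q$ varies. Two observations will drive the example: at $q=\tfrac12$ symmetry of the program forces $c_0=c_1=D$, so $R_{sum,1}(\tfrac12,D)=(1-D)H_b(p)$, the value of the naive ``erase a $D$-fraction'' scheme; whereas at $q\in\{0,1\}$ the side information is constant, so $R_{sum,1}(q,D)$ reduces to the rate-distortion function of a $\mathrm{Bernoulli}(p)$ source under erasure distortion, namely $H_b(p)-D$ for $D\le 2\min(p,1-p)$, which is strictly below $(1-D)H_b(p)$ whenever $p\neq\tfrac12$. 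Thus, by continuity, for fixed small $D$ the function $q\mapsto R_{sum,1}(q,D)$ has a strict interior maximum at $q=\tfrac12$, i.e.\ an asymmetric test channel $c_0\neq c_1$ strictly beats the naive scheme for $q$ away from $\tfrac12$.

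The crux, and the step I expect to be the main obstacle, is to pick a concrete pair $(p,D)$ and verify that the resulting $q\mapsto\rho_1(p_{X|Y}p_Y,D)$ is not concave. The ``kinematic'' term $H_b(q)-H_b(p*q)$ is itself concave in $q$ (its second derivative $H_b''(q)-(1-2p)^2H_b''(p*q)$ is $\le0$, because $s(1-s)-q(1-q)=p(1-p)(1-2q)^2\ge0$ for $s=p*q$), so it works \emph{against} us; the required non-concavity of $\rho_1$ must come from $R_{sum,1}(q,D)$ being sufficiently sharply curved where it departs from $(1-D)H_b(p)$. Concretely, I would compute, via the one-parameter program (using the envelope theorem where the optimiser, which solves a transcendental equation, varies smoothly with $q$), the second derivative of $R_{sum,1}(q,D)$ in $q$ at a suitable point and compare it with $\partial_q^2\bigl(H_b(q)-H_b(p*q)\bigr)$ there, exhibiting parameters for which the former is strictly more negative so that $\partial_q^2\rho_1>0$; equivalently, it suffices to exhibit three values $q_1<q_2<q_3$ at which the concavity chord inequality for $\rho_1(\cdot,D)$ is violated, with the three values of $R_{sum,1}$ obtained from the program. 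Keeping this transcendental optimiser under enough control to make the comparison rigorous (rather than merely numerical) is the delicate part.

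Finally, failure of concavity of $p_Y\mapsto\rho_1(p_{X|Y}p_Y,D)$ for this fixed $p_{X|Y}$ and $D$ shows that $\rho_1(p_{X|Y}p_Y,D)$ is not concave in $(p_Y,D)$, so condition~(2) of Lemma~\ref{lem:rho1rho2} fails; hence condition~(1) fails, i.e.\ there exist $p_{XY}$ and $D$ with $\rho_1(p_{XY},D)\neq\rho_2(p_{XY},D)$. Since $\rho_1\le\rho_2$ always, this forces $\rho_1(p_{XY},D)<\rho_2(p_{XY},D)$, which by (\ref{eqn:rhoi}) is precisely $R_{sum,1}(p_{XY},D)>R_{sum,2}(p_{XY},D)$, as claimed. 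This argument never constructs $V_1,V_2,g$, but it does point to exactly which $(p,D)$ and which perturbation direction a direct two-message construction (as in Theorem~\ref{thm:largeratio}) should exploit.
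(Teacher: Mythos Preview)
Your framework is exactly the paper's: same channel $p_{X|Y}$, same erasure distortion, same reduction via Lemma~\ref{lem:rho1rho2} to showing $q\mapsto\rho_1(p_{X|Y}p_Y,D)$ is not concave, and the same observation that at $q=\tfrac12$ the symmetric test channel $c_0=c_1=D$ is optimal so $\rho_1=(1+D)h(p)$. You also correctly reduce the Wyner--Ziv program to the two-parameter erasure channel $p_{U|X}$ (this is the paper's Proposition~\ref{lem:erasure}).

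The gap is in the step you flag as ``delicate'': controlling the transcendental optimiser of the one-parameter concave program well enough to compare second derivatives, or to certify a three-point chord violation, rigorously. The paper sidesteps this entirely with two moves you are missing. First, instead of computing $\rho_1(p_{X|Y}p_{Y,1},D)$ exactly at the asymmetric marginal $p_{Y,1}=\mathrm{Bernoulli}(q)$, it simply \emph{plugs in a specific suboptimal test channel}, namely $(c_0,c_1)=(\alpha_{0e},1)$, to get a closed-form lower bound $C(p,q,\alpha_{0e},1)$ on $\rho_1$; by symmetry the same bound holds at $p_{Y,2}=\mathrm{Bernoulli}(\bar q)$, so the right side of the chord inequality is bounded below in closed form (Proposition~\ref{prop:asymsource}). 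Second, it lets $D=\eta(p,q,\alpha_{0e},1)$ be the distortion of that specific scheme and compares the two sides asymptotically as $p\to0$ using the elementary limit $\lim_{p\to0}h(f(p))/h(p)=f'(0)$ (Lemma~\ref{lem:entropyratio}), obtaining $\lim_{p\to0}\bigl(C/h(p)-(1+D)\bigr)=(1-2q)(1-\alpha_{0e})>0$ for any $q\in(0,\tfrac12)$ and $\alpha_{0e}\in(0,1)$. This gives a rigorous strict inequality for all sufficiently small $p$ without ever solving for the optimiser.

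Your endpoint observation $R_{sum,1}(0,D)=H_b(p)-D$ is correct but only valid for $D\le 2p$, and combining it with the midpoint value $(1-D)H_b(p)$ does not by itself yield a chord violation (for small $p$ one would need $D>h(p)/(1-h(p))\gg 2p$), so the three-point test really does require intermediate $q$'s and hence either the exact optimiser or the paper's lower-bound trick. Once you see that a \emph{suboptimal} $(c_0,c_1)$ suffices for the right side, the transcendental difficulty disappears.
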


\begin{proof}
In the light of the discussion in Section~\ref{subsec:ratereduction},
to prove Theorem~\ref{thm:improvement}, it is sufficient to show there
exist $p_{X|Y}$, $d$, and $D$ for which $\rho_1(p_{X|Y}p_Y,D)$ is not
concave with respect to $p_Y$. In particular, it is sufficient to show
that there exist $p_{Y,1}$ and $p_{Y,2}$ such that
\begin{equation}\label{eqn:notconcave}
\rho_1\left(p_{X|Y}\frac{p_{Y,1}+p_{Y,2}}{2},D\right)<
\frac{\rho_1\left(p_{X|Y}p_{Y,1},D\right)+\rho_1\left(p_{X|Y}p_{Y,2},D\right)}{2}.
\end{equation}
Let $\mathcal{X} = \mathcal{Y} = \{0,1\}$, and
$\widehat{\mathcal{X}} = \{0,1,e\}$. Let $d$ be the binary erasure
distortion function, i.e., $d: \{0,1\}\times \{0,e,1\}\rightarrow
\{0,1,\infty\}$ and for $i=0,1$, $d(i,i)=0$, $d(i,1-i)=\infty$, and
$d(i,e)=1$. Let $p_{Y,1}(1) = 1 - p_{Y,1}(0) = p_{Y,2}(0) = 1 -
p_{Y,2}(1) = q$, i.e., $p_{Y,1} = $ Bernoulli$(q)$ and $p_{Y,2} = $
Bernoulli$(\bar q)$.\footnote{For any $a\in [0,1]$, $\bar a:=1-a$. For
the erasure symbol $e$, $\bar e := e$.} Let $p_{X|Y}$ be the
conditional pmf of the binary symmetric channel with crossover
probability $p$, i.e., $p_{X|Y}(1|0) = p_{X|Y}(0|1) = p$. Let
$p_Y:=(p_{Y,1}+p_{Y,2})/2$ which is Bernoulli$(1/2)$. The joint
distribution $p_{XY}=p_Y p_{X|Y}$ is the joint pmf of a pair of doubly
symmetric binary sources (DSBS) with parameter $p$, i.e., if $p_{xy}$
denotes $p_{XY}(x,y)$, then $p_{00} = p_{11} = \bar p/2$ and $p_{01} =
p_{10} = p/2$. For these choices of $p_{X|Y}$, $p_{Y,1}$, $p_{Y,2}$,
$p_Y$, and $d$, we will analyze the left and right sides of
(\ref{eqn:notconcave}) step by step through a sequence of definitions
and propositions and establish the strict inequality for a suitable
choice of $D$. The proofs of all the propositions are given in
Section~\ref{sec:proofs}.

\noindent $\bullet$ \emph{Left-side of (\ref{eqn:notconcave}):} From
(\ref{eqn:onemsgrate}) and (\ref{eqn:rhoi}) we have
\begin{equation}\label{eqn:generalrho1}
\rho_1 (p_{XY},D)= \max_{p_{U|X},g:~ \eE[d(X,g(U,Y))]\leq D} \{
H(X|Y,U)+H(Y|X)\}.
\end{equation}
For the binary erasure distortion and a full support joint source pmf
taking values in binary alphabets, (\ref{eqn:generalrho1}) simplifies
to the expression given in Proposition~\ref{lem:erasure}.
\begin{proposition}\label{lem:erasure} If $\mathcal X=\mathcal Y=\{0,1\}$,
$\supp(p_{XY})=\{0,1\}^2$, $d$ is the binary erasure distortion, and
$D\in \rR$, then $\rho_{1} = \max_{p_{U|X}} (H(X|Y,U) + H(Y|X))$,
where $\mathcal U=\{0,e,1\}$ and
\begin{equation}
p_{U|X}(u|x)= \left\{
\begin{array}{ll}
\alpha_{0e}, & \mbox{ if } x=0, u=e, \\
1-\alpha_{0e}, & \mbox{ if } x=0, u=0, \\
\alpha_{1e}, & \mbox{ if } x=1, u=e, \\
1-\alpha_{1e}, & \mbox{ if } x=1, u=1,\\
0, & \mbox{ otherwise,}
\end{array}\right.\label{eqn:pUX}
\end{equation}
where $\alpha_{0e}, \alpha_{1e}\in [0,1]$ satisfy
$\eE[d(X,U)]=p_X(0)\alpha_{0e}+p_X(1)\alpha_{1e}\leq D$.
\end{proposition}

The expression for $\rho_1$ further simplifies to the one in
Proposition~\ref{prop:erasure} by using $p_{U|X}$ given by
(\ref{eqn:pUX}) in (\ref{eqn:generalrho1}).
\begin{proposition} \label{prop:erasure}
If $\mathcal X=\mathcal Y=\{0,1\}$, $\supp(p_{XY})=\{0,1\}^2$, $d$
is the binary erasure distortion, and $D\in \rR$, then
\begin{equation}
\rho_1(p_{XY},D)=\max_{\scriptstyle \alpha_{0e},\alpha_{1e}\in[0,1]:
\atop \scriptstyle \phi(p_{XY},\alpha_{0e},\alpha_{1e})\leq D}
\psi(p_{XY},\alpha_{0e},\alpha_{1e}), \label{eqn:rho1}
\end{equation}
where
\begin{eqnarray*}
\lefteqn{\psi(p_{XY},\alpha_{0e},\alpha_{1e})}\\&:=&(p_{00}\alpha_{0e}+p_{10}\alpha_{1e})h\left(\frac{p_{00}\alpha_{0e}}{p_{00}\alpha_{0e}+p_{10}\alpha_{1e}}\right)\\
&&+(p_{01}\alpha_{0e}+p_{11}\alpha_{1e})h\left(\frac{p_{01}\alpha_{0e}}{p_{01}\alpha_{0e}+p_{11}\alpha_{1e}}\right)\\
&&+(p_{00}+p_{01})h\left(\frac{p_{00}}{p_{00}+p_{01}}\right)+(p_{11}+p_{10})h\left(\frac{p_{11}}{p_{11}+p_{10}}\right),
\end{eqnarray*}
$\phi(p_{XY},\alpha_{0e},\alpha_{1e}):=p_X(0) \alpha_{0e}+p_X(1)
\alpha_{1e}$, and $h$ is the binary entropy function: $h(\theta) :=
-\theta\log_2\theta -\bar\theta\log_2\bar\theta, \theta \in [0,1]$.
\end{proposition}

Finally, for a DSBS with parameter $p$ and the binary erasure
distortion, $\rho_1$ reduces to the compact expression in
Proposition~\ref{prop:DSBS}.
\begin{proposition}\label{prop:DSBS}
If $d$ is the binary erasure distortion, $D\in [0,1]$, and $p_{XY}$ is
the joint pmf of a DSBS with parameter $p$, then
\begin{equation}
\rho_1(p_{XY},D) =(1+D)h(p).\label{eqn:rho1DSBS}
\end{equation}
\end{proposition}

\noindent $\bullet$ \emph{Right-side of (\ref{eqn:notconcave}):}
Solving the rate reduction functionals in the right-side of
(\ref{eqn:notconcave}) requires solving the maximization problem
(\ref{eqn:rho1}) for asymmetric distributions $p_{X|Y}p_{Y,1}$ and
$p_{X|Y}p_{Y,2}$. Exactly solving this problem is cumbersome but it is
easy to provide a lower bound for the maximum as follows.
\begin{proposition}\label{prop:asymsource}
If $d$ is the binary erasure distortion, $p_{Y,1}$ is Bernoulli$(q)$,
$p_{Y,2}$ is Bernoulli$(\bar q)$, and $p_{X|Y}$ is the conditional pmf
of the binary symmetric channel with crossover probability $p$, then
the inequality
\begin{equation}
\frac{\rho_1(p_{X|Y}p_{Y,1},D)+\rho_1(p_{X|Y}p_{Y,2}, D)}{2}\geq
  C(p,q,\alpha_{0e},1)\label{eqn:rho1andC}
\end{equation}
holds for $D=\eta(p,q,\alpha_{0e},1)$, where
\begin{eqnarray}
C(p,q,\alpha_{0e},\alpha_{1e})&:=&\psi(p_{X|Y}p_{Y,1},\alpha_{0e},\alpha_{1e}),\nonumber\\
\eta(p,q,\alpha_{0e},\alpha_{1e})&:=&\phi(p_{X|Y}p_{Y,1},\alpha_{0e},\alpha_{1e}).
\nonumber
\end{eqnarray}
\end{proposition}
\begin{remark}\label{remark:achieveC} The rate-distortion tuple
$(H(X|Y)+H(Y|X)-C(p,q,\alpha_{0e},1),\eta(p,q,\alpha_{0e},1))$ is
admissible for one-message source coding for joint source distribution
$p_{X|Y}p_{Y,1}$ and corresponds to choosing $p_{U|X}$ given by
(\ref{eqn:pUX}) with $\alpha_{1e}=1$ and the decoding function
$g(u,y)=u$. Since this choice of $p_{U|X}$ and $g$ may be suboptimal,
$C(p,q,\alpha_{0e},1)$ is only a lower bound for the rate reduction
functional.
\end{remark}

\noindent $\bullet$ \emph{Comparing left and right sides of
(\ref{eqn:notconcave}):} The left-side of (\ref{eqn:notconcave}) and
the lower bound of the right-side of (\ref{eqn:notconcave}) can be
compared as follows.
\begin{proposition}\label{prop:takinglimit}
Let $d$ be the binary erasure distortion, $p_Y$ be Bernoulli$(1/2)$,
and $p_{X|Y}$ be the binary symmetric channel with parameter $p$.  For
all $q\in(0,1/2)$ and all $\alpha_{0e}\in (0,1)$, there exists $
p\in(0,1)$ such that the strict inequality $\rho_1(p_{XY},D) <
C(p,q,\alpha_{0e},1)$ holds for $D=\eta(p,q,\alpha_{0e},1)$.
\end{proposition}

Since the left-side of (\ref{eqn:notconcave}) is strictly less than a
lower bound of the right-side of (\ref{eqn:notconcave}), the strict
inequality (\ref{eqn:notconcave}) holds, which completes the proof of
Theorem~\ref{thm:improvement}.
\end{proof}

Theorem~\ref{thm:largeratio} quantifies the multiplicative reduction
in the sum-rate that is possible with two messages.
\begin{theorem}\label{thm:largeratio}
If $d$ is the binary erasure distortion and $p_{XY}$ the joint pmf of
a DSBS with parameter $p$, then for all $L >0$ there exists an
admissible two-message rate-distortion tuple $(R_1,R_2, D)$ such that
$R_{sum,1}(p_{XY},D)/(R_1+R_2)>L$ and $R_1/R_2<1/L$.
\end{theorem}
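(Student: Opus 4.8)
\noindent\emph{Proof outline.} The plan is to construct, for every $L>0$, explicit auxiliary variables $V_1,V_2$ and a decoder $g$ satisfying the conditions in (\ref{eqn:rateregion}), depending on parameters $p$, $q\in(0,1/2)$ and $\alpha_{0e}\in(0,1)$, and then to let $p\to 0$ with $q$ and $\alpha_{0e}$ fixed. I would take $V_1\in\{1,2\}$ to be the ``source splitter'' with $p_{V_1|Y}(1|0)=p_{V_1|Y}(2|1)=\bar q$; then $V_1$ is uniform and the law of $(X,Y)$ conditioned on $V_1=j$ equals $p_{X|Y}p_{Y,j}$ with $p_{Y,1}=$ Bernoulli$(q)$ and $p_{Y,2}=$ Bernoulli$(\bar q)$, exactly the distributions appearing in the proof of Theorem~\ref{thm:improvement}. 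Conditioned on $V_1=1$, let $V_2=U$ be generated from $X$ by $p_{U|X}$ as in (\ref{eqn:pUX}) with $\alpha_{1e}=1$ (the one-message code of Remark~\ref{remark:achieveC}); conditioned on $V_1=2$, use the $0$--$1$-symmetric channel; and set $g(v_1,v_2,y)=v_2$. The Markov chains $V_1-Y-X$ and $V_2-(X,V_1)-Y$ hold, so with $R_1:=I(Y;V_1|X)$, $R_2:=I(X;V_2|Y,V_1)$ and $D:=\eE[d(X,V_2)]$, the triple $(R_1,R_2,D)$ is admissible by (\ref{eqn:rateregion}).

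Next I would put these quantities in closed form. Since $V_1$ is a symmetric noisy copy of $Y$, $R_1=H(V_1|X)-H(V_1|Y)=h(p\ast q)-h(q)$, where $p\ast q:=p\bar q+\bar p q$; for fixed $q\in(0,1/2)$ this is $\Theta(p)$ as $p\to 0$. Conditioned on either value of $V_1$ the scheme is the one-message Wyner--Ziv code of Remark~\ref{remark:achieveC} applied to $p_{X|Y}p_{Y,1}$ (up to $0$--$1$ relabeling), so the two conditional contributions coincide: $D=\eta(p,q,\alpha_{0e},1)$, and the $H(Y|X)$-terms in $R_1$ and $R_2$ cancel to give $R_1+R_2=H(X|Y)+H(Y|X)-C(p,q,\alpha_{0e},1)=2h(p)-C(p,q,\alpha_{0e},1)$ for the DSBS. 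Because $D\in[0,1]$, Proposition~\ref{prop:DSBS} gives $R_{sum,1}(p_{XY},D)=2h(p)-(1+D)h(p)=(1-D)h(p)$, and the algebraic identity $1-\eta(p,q,\alpha_{0e},1)=\bar\alpha_{0e}(1-p\ast q)$ turns this into $R_{sum,1}(p_{XY},D)=\bar\alpha_{0e}(1-p\ast q)\,h(p)$.

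Finally I would carry out the $p\to 0$ asymptotics with $q\in(0,1/2)$ and $\alpha_{0e}\in(0,1)$ fixed. Expanding the four terms of $\psi$ in Proposition~\ref{prop:erasure} at $p_{X|Y}p_{Y,1}$ shows $C(p,q,\alpha_{0e},1)=(2-q\bar\alpha_{0e})\,p\log_2(1/p)\,(1+o(1))$; hence $R_1+R_2=2h(p)-C(p,q,\alpha_{0e},1)=q\bar\alpha_{0e}\,p\log_2(1/p)\,(1+o(1))$, and since $R_1=\Theta(p)=o(p\log_2(1/p))$ also $R_2=q\bar\alpha_{0e}\,p\log_2(1/p)\,(1+o(1))$, whereas $R_{sum,1}(p_{XY},D)=\bar q\bar\alpha_{0e}\,p\log_2(1/p)\,(1+o(1))$ because $1-p\ast q\to\bar q$ and $h(p)=p\log_2(1/p)(1+o(1))$. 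Therefore $R_1/R_2\to 0$ and $R_{sum,1}(p_{XY},D)/(R_1+R_2)\to\bar q/q$ as $p\to 0$. Given $L>0$, take $\alpha_{0e}=1/2$ and $q=1/(2(L+1))\in(0,1/2)$, so the second limit equals $2L+1>L$; then for all sufficiently small $p$ we have both $R_{sum,1}(p_{XY},D)/(R_1+R_2)>L$ and $R_1/R_2<1/L$, and the resulting $(R_1,R_2,D)$ is the required tuple. The step I expect to be the main obstacle is the asymptotics of $C(p,q,\alpha_{0e},1)$: one must verify that subtracting it from $2h(p)$ does not cancel the leading $p\log_2(1/p)$ term but leaves exactly the coefficient $q\bar\alpha_{0e}$, as this is what pins the limiting ratio at $\bar q/q$ and dictates how small $q$ must be chosen. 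The remaining pieces --- the Markov and admissibility checks, the formula for $R_1$, the cancellation giving $R_1+R_2=2h(p)-C$, and the $R_{sum,1}$ simplification --- are routine.
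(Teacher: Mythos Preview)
Your proposal is correct and follows essentially the same approach as the paper: the construction of $V_1$ (a BSC$(q)$ output of $Y$), $V_2$ (the channel of Remark~\ref{remark:achieveC} applied symmetrically on the two $V_1$-slices), and $g(v_1,v_2,y)=v_2$ is exactly the paper's, and your asymptotic analysis $R_1/R_2\to 0$, $R_{sum,1}/(R_1+R_2)\to\bar q/q$ as $p\to 0$ matches Proposition~\ref{prop:largeratio}. The only cosmetic differences are that you write $R_1=h(p\ast q)-h(q)$ directly (equal to the paper's $H(Y|X)-C_2(p,q)$) and obtain the leading coefficient of $C$ by Taylor expansion rather than via Lemma~\ref{lem:entropyratio}.
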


\begin{proof}
We will explicitly construct $p_{V_1|Y},p_{V_2|XV_1}$, and $g$ in
(\ref{eqn:rateregion}) which lead to an admissible tuple
$(R_1,R_2,D)$. Let $p_{V_1|Y}$ be the conditional pmf of the binary
symmetric channel with crossover probability $q$.  Let the conditional
distribution $p_{V_2|XV_1}(v_2|x,v_1)$ have the form described in
Table~\ref{tab:pV2} and let $g(v_1,v_2,y):=v_2$.
\begin{table}[!htb]
\vglue -0.3cm
  \centering
 \caption{Conditional distribution $p_{V_2|XV_1}$}
\vglue -0.3cm
  \begin{tabular}{|c|c|c|c|}
  \hline
   $p_{V_2|XV_1}$ & $v_2=0$ & $v_2=e$& $v_2=1$ \\
   \hline
  $x=0,v_1=0$ & $1-\alpha$ & $\alpha$ & $0$ \\
  \hline
  $x=1,v_1=0$ & $0$ & $1$ & $0$\\
  \hline
  $x=0,v_1=1$ & $0$ & $1$ & $0$ \\
  \hline
  $x=1,v_1=1$ & $0$ & $\alpha$ & $1-\alpha$\\
  \hline
\end{tabular}
\label{tab:pV2}
\vglue -0.3cm
\end{table}

The corresponding rate-distortion tuple can be shown to satisfy the
following property.
\begin{proposition}\label{prop:largeratio}
Let $d$ be the binary erasure distortion and let $p_{XY}$ be the joint
pmf of a DSBS with parameter $p$. For $p_{V_1|Y}, p_{V_2|XV_1}$, and
$g$ as described above, and all $L >0$, there exist parameters
$p,q,\alpha$ such that the two-message rate-distortion tuple
$(R_1,R_2, D)$ given by $R_1 = I(Y;V_1|X)$, $R_2 = I(X;V_2|Y,V_1)$, $D
= \eE[d(X,V_2)]$ satisfies $R_{sum,1}(p_{XY},D)/(R_1+R_2)>L$ and
$R_1/R_2<1/L$.
\end{proposition}


This completes the proof of Theorem~\ref{thm:largeratio}. \end{proof}

The conditional pmfs $p_{V_1|Y}$ and $p_{V_2|XV_1}$ in the proof of
Theorem~\ref{thm:largeratio} are related to the conditional pmf
$p_{U|X}$ in the proof of Theorem~\ref{thm:improvement} as follows.
Given $V_1=0$, the conditional distribution
$p_{XYV_2|V_1}(x,y,v_2|0)=p_{Y,1}(y) p_{X|Y}(x|y)p_{U|X}(v_2|x)$,
where $p_{U|X}$ is given by (\ref{eqn:pUX}) with $\alpha_{0e}=\alpha$
and $\alpha_{1e}=1$. Given $V_1=1$, the conditional distribution
$p_{XYV_2|V_1}(x,y,v_2|1)=p_{Y,2}(y) p_{X|Y}(x|y)p_{U|X}(v_2|x)$,
where $p_{U|X}$ is given by (\ref{eqn:pUX}) with $\alpha_{1e}=\alpha$
and $\alpha_{0e}=1$. Conditioning on $V_1$, in effect, decomposes the
two-message problem into two one-message problems that were analyzed
in the proof of Theorem~\ref{thm:improvement}.

\section{Proofs}\label{sec:proofs}
\noindent{\em Proof of Proposition~\ref{lem:erasure}:}
Given a general $p_{U|X}$ and $g$ satisfying the original constraint
in (\ref{eqn:generalrho1}), we will construct $U^*$ satisfying the
stronger constraints in Proposition~\ref{lem:erasure} with an
objective function that is not less than the original one as follows.

Without loss of generality, we assume $\supp(p_U)=\mathcal U$. For
$i=0,1$, let $\mathcal U_i:=\{u \in \mathcal U: p_{X|U}(i|u)=1\}$.
Let $\mathcal U_e:=\{u\in \mathcal U: p_{X|U}(1|u)\in(0,1)\}$. Then
$\{\mathcal U_1, \mathcal U_0, \mathcal U_e\}$ forms a partition of
$\mathcal U$. For each $u\in \mathcal U_e$, since $p_{XY|U}(x,y|u)>0$
for all $(x,y)\in \{0,1\}^2$, it follows that $g(u,y=0)=g(u,y=1)=e$
must hold, because otherwise $\eE(d(X,g(U,Y)))=\infty$. But for every
$u\in \mathcal U_i$, $i=0,1$, $g(u,y)$ may equal $i$ or $e$ but not
$(1-i)$ to get a finite distortion. When we replace $g$ by
\begin{equation*}
g^*(u,y)= \left\{
\begin{array}{ll}
i, & \mbox{ if } u\in \mathcal U_i, i=0,1, \\
e, & \mbox{ if } u\in \mathcal U_e,
\end{array}\right.
\end{equation*}
the distortion for $u\in \mathcal U_i, i=0,1,$ is reduced to zero, and
the distortion for $u\in \mathcal U_e$ remains unchanged.  Therefore
we have $\eE(d(X,g^*(U,Y)))\leq \eE(d(X,g(U,Y)))\leq D$.  Note that
$g^*(U,Y)$ is completely determined by $U$. Let $U^* :=g^*(U,Y)$. Then
$U^*=i$ iff $U\in \mathcal U_i, i=\{0,1,e\}$. The objective function
$H(X|Y,U)+H(Y|X) = H(X|Y,U,U^*)+H(Y|X) \leq H(X|Y,U^*)+H(Y|X)$, which
completes the proof. \hspace*{\fill}~\QED\\

\noindent{\em Proof of Proposition~\ref{prop:DSBS}:}

For a fixed $p_{XY}$, $H(X|Y,U)+H(Y|X)$ is concave with respect to
$p_{XYU}$ and therefore also $p_{U|X}$. Since $p_{U|X}$ is linear with
respect to $(\alpha_{0e},\alpha_{1e})$,
$\psi(p_{XY},\alpha_{0e},\alpha_{1e})=H(X|Y,U)+H(Y|X)$ is concave with
respect to $(\alpha_{0e},\alpha_{1e})$.

The maximum in (\ref{eqn:rho1}) can be achieved along the axis of
symmetry given by $\alpha_{1e}=\alpha_{0e}$ because (i) $\psi$ and
$\phi$ are both symmetric with respect to $\alpha_{0e}$ and
$\alpha_{1e}$, i.e.,
$\psi(p_{XY},\alpha_{0e},\alpha_{1e})=\psi(p_{XY},\alpha_{1e},\alpha_{0e})$
and
$\phi(p_{XY},\alpha_{0e},\alpha_{1e})=\phi(p_{XY},\alpha_{1e},\alpha_{0e})$,
and (ii) $\psi(p_{XY},\alpha_{0e},\alpha_{1e})$ is a concave function
of $(\alpha_{0e},\alpha_{1e})$. When $D\in [0,1]$, $\rho_1$ can be
further simplified as follows.
\begin{equation*}
\rho_1(p_{XY},D) =\max_{\scriptstyle \alpha_{0e}=\alpha_{1e}\in [0,D]}
\psi(p_{XY},\alpha_{0e},\alpha_{1e})=(1+D)h(p),
\end{equation*}
which completes the proof. \hspace*{\fill}~\QED\\

\noindent{\em Proof of Proposition~\ref{prop:asymsource}:}
\begin{table}[!htb]
\vglue -0.3cm
  \centering
 \caption{Joint distribution $p_{X|Y}p_{Y,1}$}
\vglue -0.3cm
  \begin{tabular}{|c|c|c|}
  \hline
   $p_{X|Y}p_{Y,1}$ & $y=0$ & $y=1$ \\
   \hline
  $x=0$ & $\bar p\bar q$ & $pq$ \\
  \hline
  $x=1$ & $p\bar q$ & $\bar p q$ \\
  \hline
\end{tabular}
\label{tab:pXY1}
\vglue -0.3cm
\end{table}
For the joint pmf $p_{X|Y}p_{Y,1}$ summarized in Table~\ref{tab:pXY1},
functions $\psi$ and $\eta$ simplify even further to special functions
of $(p,q,\alpha_{0e},\alpha_{1e})$ as follows:
\begin{eqnarray}
C(p,q,\alpha_{0e},\alpha_{1e})&=&\psi(p_{X|Y}p_{Y,1},\alpha_{0e},\alpha_{1e})
\nonumber\\
&=&\bar q(\bar p  \alpha_{0e}+p \alpha_{1e})h\left(\frac{\bar p \alpha_{0e}}{\bar p \alpha_{0e}+ p
\alpha_{1e}}\right)\nonumber\\
&&+q(p \alpha_{0e}+\bar p
\alpha_{1e})h\left(\frac{p\alpha_{0e}}{p\alpha_{0e}+\bar p
  \alpha_{1e}}\right)\nonumber\\
&&+(\bar p \bar q + p q) h\left(\frac{\bar p \bar q}{\bar p \bar q +
p q}\right)\nonumber\\
&&+(\bar p q+p \bar q)h\left(\frac{\bar p q}{\bar p q+p \bar
q}\right),\label{eqn:Cpq}\\
\eta(p,q,\alpha_{0e},\alpha_{1e})&=&\phi(p_{X|Y}p_{Y,1},\alpha_{0e},\alpha_{1e})
\nonumber\\
&=&(\bar p \bar q + p q)\alpha_{0e}+(\bar p q+p \bar q)\alpha_{1e}.\nonumber
\end{eqnarray}
Observe that $C(p,q,\alpha_{0e},\alpha_{1e})=C(p,\bar
q,\alpha_{1e},\alpha_{0e})$, and $\eta(p,q,\alpha_{0e},\alpha_{1e}) =
\eta(p,\bar q,\alpha_{1e},\alpha_{0e})$ hold. Therefore we have
\begin{eqnarray*}
\rho_1(p_{X|Y}p_{Y,2},D)&=&\max_{\scriptstyle
\alpha_{0e},\alpha_{1e}\in[0,1]: \atop \scriptstyle \eta(p,\bar q,
\alpha_{0e},\alpha_{1e})\leq D} C(p,\bar
q,\alpha_{0e},\alpha_{1e})\\
&=&\max_{\scriptstyle \alpha_{0e},\alpha_{1e}\in[0,1]: \atop
\scriptstyle
\eta(p,q, \alpha_{1e},\alpha_{0e})\leq D} C(p,q,\alpha_{1e},\alpha_{0e})\\
&=&\rho_1(p_{X|Y}p_{Y,1},D).
\end{eqnarray*}
It follows that
\begin{eqnarray}
\frac{\rho_1(p_{X|Y}p_{Y,1},D)+\rho_1(p_{X|Y}p_{Y,2},
  D)}{2}&=&\rho_1(p_{X|Y}p_{Y,1},D)\nonumber\\
  &\geq& C(p,q,\alpha_{0e},1)\nonumber
\end{eqnarray}
holds for $D=\eta(p,q,\alpha_{0e},1)$. \hspace*{\fill}~\QED

\noindent{\em Proof of Proposition~\ref{prop:takinglimit}:}

Since $D=\eta(p,q,\alpha_{0e},1)\in[0,1]$ always holds, we have
$\rho_1(p_{XY},D)=(1+D)h(p)$ due to (\ref{eqn:rho1DSBS}). We will
show that for any fixed $q\in(0,1/2)$ and $\alpha_{0e}\in (0,1)$,
$\lim_{p\rightarrow 0} C(p,q,\alpha_{0e},1)/h(p)> \lim_{p\rightarrow
0} (1+D)$ holds, which implies that $\exists p\in(0,1)$ such that
$C(p,q,\alpha_{0e},1)/h(p)>(1+D)$, which, in turn, implies
Proposition~\ref{prop:takinglimit}. It is convenient to use the
following lemma to analyze the limits.
\begin{lemma}\label{lem:entropyratio}
Let $f(p)$ be a function differentiable around $p=0$ such that
$f(0)=0$ and $f'(0)>0$. Then
\[\lim_{p\rightarrow 0} \frac{h(f(p))}{h(p)}=f'(0)\]
\end{lemma}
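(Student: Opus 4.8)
The plan is to prove Lemma~\ref{lem:entropyratio} by reducing it to L'Hôpital's rule after understanding the local behavior of $h$ near $0$. Recall that $h(\theta) = -\theta\log_2\theta - \bar\theta\log_2\bar\theta$, so as $\theta \to 0^+$ we have $h(\theta) = -\theta\log_2\theta + O(\theta)$, and in particular $h(\theta) \to 0$ and $h'(\theta) = \log_2(\bar\theta/\theta) \to +\infty$. Thus both $h(f(p))$ and $h(p)$ tend to $0$ as $p \to 0^+$ (using $f(0)=0$ and continuity of $h$ at $0$), so the ratio is a $0/0$ indeterminate form and L'Hôpital's rule applies.

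The key steps I would carry out: First, note $f(p) > 0$ for small positive $p$ since $f(0)=0$ and $f'(0)>0$; this is needed so that $h(f(p))$ is in the regime where $h$ is differentiable and the formula for $h'$ is valid. Second, differentiate: $\frac{d}{dp} h(f(p)) = h'(f(p)) f'(p) = f'(p)\log_2\!\big(\bar{f(p)}/f(p)\big)$ and $\frac{d}{dp} h(p) = \log_2(\bar p/p)$. Third, form the ratio of derivatives:
\[
\frac{f'(p)\log_2\!\big((1-f(p))/f(p)\big)}{\log_2\big((1-p)/p\big)} = f'(p)\cdot \frac{\log_2(1-f(p)) - \log_2 f(p)}{\log_2(1-p) - \log_2 p}.
\]
As $p \to 0^+$, the terms $\log_2(1-f(p))$ and $\log_2(1-p)$ stay bounded, while $-\log_2 f(p)$ and $-\log_2 p$ both diverge to $+\infty$, so the fraction on the right is asymptotically $\frac{\log_2 f(p)}{\log_2 p}$. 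Fourth, evaluate $\lim_{p\to 0^+}\frac{\log_2 f(p)}{\log_2 p}$: since $f(p)/p \to f'(0) \in (0,\infty)$, we have $\log_2 f(p) = \log_2 p + \log_2(f(p)/p) = \log_2 p + O(1)$, hence the ratio tends to $1$. Combining, the ratio of derivatives tends to $f'(0)\cdot 1 = f'(0)$, and L'Hôpital's rule gives $\lim_{p\to 0} h(f(p))/h(p) = f'(0)$.

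The main obstacle, such as it is, is bookkeeping rather than depth: I must be careful that L'Hôpital's hypotheses are genuinely met — namely that the denominator $h(p)$ and its derivative are nonzero on a punctured right-neighborhood of $0$ (true, since $h$ is strictly increasing and concave near $0$ with $h' > 0$ there), and that the limit of the derivative ratio actually exists (established above). One should also handle the one-sided nature of the limit: the statement says "differentiable around $p=0$," so I interpret the limit as two-sided only if $f$ is defined on both sides; but in the application $p$ is a probability and the relevant limit is $p \to 0^+$, so a one-sided argument suffices and is what I would write. A clean alternative that sidesteps L'Hôpital entirely is to write $h(\theta) = \theta\log_2(1/\theta) + \theta\log_2(1/\bar\theta)\cdot\frac{\bar\theta}{\theta}$... more simply, $h(\theta)/(\theta\log_2(1/\theta)) \to 1$ as $\theta\to 0^+$, so $h(f(p))/h(p) \sim \frac{f(p)\log_2(1/f(p))}{p\log_2(1/p)} = \frac{f(p)}{p}\cdot\frac{\log_2(1/f(p))}{\log_2(1/p)} \to f'(0)\cdot 1$, using the same $\log_2 f(p) = \log_2 p + O(1)$ estimate; I would likely present this asymptotic-equivalence version as it is the most transparent.
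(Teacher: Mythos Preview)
Your proposal is correct and follows essentially the same route as the paper: apply l'H\^{o}pital to the $0/0$ form $h(f(p))/h(p)$, reduce to the ratio $\dfrac{\ln f(p)}{\ln p}$ after discarding the bounded $\ln(1-\cdot)$ terms, and then show this ratio tends to $1$. The only cosmetic difference is that the paper evaluates $\lim \ln f(p)/\ln p$ by a second application of l'H\^{o}pital (yielding $\lim p f'(p)/f(p)=1$), whereas you use the equivalent observation $\log f(p)=\log p + O(1)$; your alternative asymptotic-equivalence argument via $h(\theta)\sim \theta\log_2(1/\theta)$ is also valid and arguably cleaner.
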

\begin{proof}
Applying the l'H\^{o}pital rule several times, we have
\begin{eqnarray*}
\lim_{p\rightarrow 0} \frac{h(f(p))}{h(p)}&=&\lim_{p\rightarrow 0}
\frac{\ln(1-f(p))-\ln f(p)}{\ln(1-p)-\ln p } f'(0)\\
&=&\lim_{p\rightarrow 0} \frac{\ln f(p)}{\ln p}
f'(0)\\&=&\lim_{p\rightarrow 0} \frac{p}{f(p)} (f'(0))^2\\&=&f'(0),
\end{eqnarray*}
which completes the proof of Lemma~\ref{lem:entropyratio}.
\end{proof}

Applying Lemma~\ref{lem:entropyratio}, we have
\begin{eqnarray}
 &&\lim_{p\rightarrow 0}
\frac{C(p,q,\alpha_{0e},1)}{h(p)} =2 - q(1-\alpha_{0e}),\label{eqn:limitC}\\
&&\lim_{p\rightarrow 0} (1+D) =
2-\bar q (1- \alpha_{0e}),\label{eqn:limitrho1}\\
&&\lim_{p\rightarrow 0}
\left(\frac{C(p,q,\alpha_{0e},1)}{h(p)}-(1+D)\right) =
(1-2q)(1-\alpha_{0e}).\nonumber
\end{eqnarray}
Therefore for any $\alpha_{0e}\in(0,1)$ and $q\in(0,1/2)$, there
exists a small enough $p$ such that $C(p,q,\alpha_{0e},1)>(1+D)$
holds, which completes the proof.\hspace*{\fill}~\QED\\

%

\noindent{\em Proof of Proposition~\ref{prop:largeratio}:}

For the rate-distortion tuple $(R_1,R_2,D)$ corresponding to the
choice of $p_{V_1|Y}$, $p_{V_2|XV_1}$ and $g$ described in the proof
of Theorem~\ref{thm:largeratio}, we have (i) $R_1 =
I(Y;V_1|X)=H(Y|X)- C_2(p,q)$, where $C_2(p,q)$ is the sum of the
last two terms in (\ref{eqn:Cpq}); (ii)
$R_2=I(X;V_2|Y,V_1)=2h(p)-C(p,q,\alpha,1)-R_1$; and (iii)
$D=\eta(p,q,\alpha,1)$. It follows that
\begin{eqnarray*}
\lim_{p\rightarrow 0} \frac{R_{1}}{h(p)} \!\!\!&=&\!\!\! 0,\\
\lim_{p\rightarrow 0} \frac{R_{2}}{h(p)} \!\!\!&=&\!\!\!
2-\lim_{p\rightarrow 0}
\frac{C(p,q,\alpha,1)}{h(p)}-\lim_{p\rightarrow 0} \frac{R_1}{h(p)}
=q(1-\alpha).
\end{eqnarray*}
Therefore for all $q>0$ and $\alpha\in (0,1)$, we have
\begin{equation}\lim_{p\rightarrow
0}\frac{R_1}{R_2}=0.\label{eqn:R1R2}
\end{equation}
For the one-message rate-distortion function, we have
$R_{sum,1}(p_{XY},D)=2h(p)-\rho_1(p_{XY},D)$, where
$\rho_1(p_{XY},D)$ is given by (\ref{eqn:rho1DSBS}). Therefore we
have
\[
\lim_{p\rightarrow 0} \frac{R_{sum,1}(p_{XY},D)}{h(p)} {=}
2-\lim_{p\rightarrow 0} \frac{\rho_1(p_{XY},D)}{h(p)} {=} \bar q
(1-\alpha),
\]
which implies that
\begin{equation}
\lim_{p\rightarrow 0} \frac{R_{sum,1}(p_{XY},D)}{R_1+R_2} = \frac{\bar
q}{q}. \label{eqn:Rsum1R1plusR2}
\end{equation}
For any $L >0$, we can always find a small enough $q>0$ such that
$\bar q/q > L+1$. Due to (\ref{eqn:R1R2}) and
(\ref{eqn:Rsum1R1plusR2}), there exists $p>0$ such that $R_1/R_2<1/L$
and $R_{sum,1}/(R_1+R_2)>L$.\hspace*{\fill}~\QED

\begin{remark}
The convergence of the limit analyzed in Lemma~\ref{lem:entropyratio}
is actually slow, because the logarithm function increases to infinity
slowly. The consequence is that if one chooses a small $q$ to get
$R_{sum,1}/(R_1+R_2)$ close to the limit $\bar q/q$, then $p$ needs to
be very small. For example, when $q=1/10, \alpha_{0e}=1/2$, $\bar
q/q=9$, with $p=10^{-200}$, we get $R_{sum,1}/R_{sum,2}^*\approx
8.16$. This, however, does not mean that the benefit of multiple
messages only occurs in extreme cases. In numerical computations we
have observed that for the erasure distortion, the gain for certain
asymmetric sources can be much more than that for the DSBS example
analyzed in this paper. The DSBS example was chosen in this paper only
because it is easy to analyze.
\end{remark}

\section*{Acknowledgment} The second author would like to thank
Dr.~Vinod Prabhakaran for introducing him to the unresolved question
in \cite{Kaspi1985}.


\appendices
\renewcommand{\theequation}{\thesection.\arabic{equation}}
\setcounter{equation}{0}

\footnotesize

\bibliography{newbibfile}
\end{document}